\newtheorem{thm}{Theorem}[section]
\newtheorem{prop}[thm]{Proposition}
\newtheorem{lem}[thm]{Lemma}
\newtheorem{cor}[thm]{Corollary}
\theoremstyle{definition}
\newtheorem{defn}[thm]{Definition}
\newtheorem{exam}[thm]{Example}
\newtheorem{rem}[thm]{Remark}
\newtheorem*{struc}{Structure of the paper}
\newcommand{\Ric}{\operatorname{Ric}}
\newcommand{\Int}{\operatorname{Int}}
\newcommand{\CB}{\mathcal{C}}
\newcommand{\real}{\mathbb{R}}
\newcommand{\nat}{\mathbb{N}}
\newcommand{\bZ}{\mathbb{Z}}
\title[Minimal Elements of the Causal Boundary]{Minimal Elements of the Causal Boundary with Applications to Spacetime Splitting}
\author{Leonardo García-Heveling}
\address{Fachbereich Mathematik, Universit\"at Hamburg, Bundesstra{\ss}e 55, 20146 Hamburg, Germany}
\email{leonardo.garcia@uni-hamburg.de}
\urladdr{https://leogarciaheveling.github.io/}
\thanks{I would like to thank  Sa\'ul Burgos, Jos\'e Luis Flores, Eric Ling, Olaf M\"uller, Arpan Saha, Miguel S\'anchez, and an anonymous referee for useful discussions and comments on the draft.}
\subjclass[2010]{53C50 (primary), 83C75 (secondary)}
\keywords{Lorentzian geometry, general relativity, global hyperbolicity, causal boundary, Bartnik splitting conjecture, conformal symmetries}
\begin{document}

\begin{abstract}
 In 1972, Geroch, Kronheimer, and Penrose introduced what is now called the causal boundary of a spacetime. This boundary is constructed out of Terminal Indecomposable Past sets (TIPs) and their future analogues (TIFs), which are the pasts and futures of inextendible causal curves. The causal boundary is a key tool to understand the global structure of a spacetime. In this paper, we show that in a spacetime with compact Cauchy surfaces, there is always at least one minimal TIP and one minimal TIF, minimal meaning that it does not contain another TIP (resp.\ TIF) as a proper subset. We then study the implications of the minimal TIP and TIF meeting each other. This condition generalizes some of the ``no observer horizon'' conditions that have been used in the literature to obtain partial solutions of the Bartnik splitting conjecture. We also show that such a no observer horizons condition is satisfied when the spacetime has a (possibly discrete) timelike conformal symmetry, generalizing a result of Costa e Silva, Flores, and Herrera about conformal Killing vector fields.
\end{abstract}

\maketitle

\section{Introduction}

Mathematicians often like to compactify spaces by adding a point or boundary representing infinity. This is also the case in Lorentzian geometry, the framework describing spacetime in general relativity. On Lorentzian spacetimes, the notion of causal curve plays a central role, representing the physically admissible trajectories not exceeding the speed of light. Thus it is natural to consider the boundary at infinity of a spacetime to consist of the ``ideal endpoints'' to all inextendible causal curves $\gamma$. Geroch, Kronheimer and Penrose \cite{GKP} made this idea precise in the following way. On a distinguishing spacetime, there is a one-to-one correspondence between points $p$ and their past sets $I^-(p)$. Similarly, to every inextendible causal curve $\gamma$, there corresponds a past set $I^-(\gamma)$, called a TIP. The set of all TIPs constitutes the future causal boundary $\CB^+$ of the spacetime, while analogously there exists a past causal boundary $\CB^-$ made of so-called TIFs (see Section \ref{sec:prelim} for a more detailed explanation). Understanding the causal boundary is the key to many problems in Lorentzian geometry, such as the Bartnik splitting conjecture, which we discuss at the end of this introduction. The premise of this paper is to introduce and study a special kind of TIP.

\begin{defn}
 A TIP is called a \emph{minimal TIP} if it contains no other TIPs as proper subsets. Similarly, a TIF is called \emph{minimal TIF} if it contains no other TIFs as proper subsets.
\end{defn}

This definition is, to the best of our knowledge, new. The complementary notion of \emph{maximal} TIP, and, more generally, maximal IP (see Definition~\ref{def:IP}) does appear often in the literature, see e.g.\ \cite[Prop.~2.2]{Flo}. Our main result is that a spacetime with compact Cauchy surfaces always contains a minimal TIP and a minimal TIF.

\begin{thm} \label{thm:mTIP}
 Let $(M,g)$ be a spacetime containing a compact Cauchy surface. Then every TIP contains a minimal TIP, and every TIF contains a minimal TIF. In particular, there exists at least one minimal TIP and one minimal TIF in the spacetime.
\end{thm}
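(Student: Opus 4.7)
The natural strategy is to apply Zorn's lemma to the collection $\mathcal{F}$ of TIPs contained in a given TIP $P$, partially ordered by inclusion; a minimal element of $\mathcal{F}$ is precisely a minimal TIP inside $P$. It therefore suffices to exhibit, for every totally ordered descending chain $\{P_\alpha\}_{\alpha\in A}\subseteq \mathcal{F}$, a TIP contained in every $P_\alpha$, which then serves as the required lower bound.

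To produce such a lower bound I would use the limit curve theorem. Write $P_\alpha=I^-(\gamma_\alpha)$ with each $\gamma_\alpha$ a future-inextendible causal curve, fix a compact Cauchy surface $\Sigma$ and a Cauchy time function $\tau$ with $\tau^{-1}(0)=\Sigma$, and normalise each $\gamma_\alpha$ so that $\gamma_\alpha(0)\in\Sigma$ and $\gamma_\alpha$ is parametrised by $\tau$. The first step is to reduce to a countable cofinal subchain $\{P_{\alpha_n}\}$ with $\bigcap_n P_{\alpha_n}=\bigcap_\alpha P_\alpha$, exploiting the second countability of $M$. Compactness of $\Sigma$ then yields a subsequence along which the starting points $\gamma_{\alpha_n}(0)$ converge, and the limit curve theorem furnishes a limit curve $\gamma_*$; this $\gamma_*$ is future-inextendible because $\tau\circ\gamma_{\alpha_n}(s)\to\infty$ along each curve and the Cauchy slices $\tau^{-1}(t)$ are compact, so escape to future infinity survives in the limit.

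To check that $P_*:=I^-(\gamma_*)$ lies inside every $P_\alpha$, take $p\in I^-(\gamma_*)$ and pick $t$ with $p\ll\gamma_*(t)$; openness of $\ll$ combined with uniform-on-compacta convergence $\gamma_{\alpha_{n_k}}\to\gamma_*$ of a subsequence gives $p\ll\gamma_{\alpha_{n_k}}(t)$ for large $k$, so $p\in P_{\alpha_{n_k}}$. Nesting of the chain together with cofinality of $\{\alpha_n\}$ then forces $p\in P_\alpha$ for every $\alpha\in A$. Since $\gamma_*$ is future-inextendible, $P_*$ is itself a TIP and furnishes the desired lower bound; Zorn's lemma then delivers a minimal TIP inside $P$. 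The minimal TIF statement follows by applying the same argument to the time-reversed spacetime.

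The step I expect to be the main obstacle is the reduction to a countable cofinal subchain: arbitrary nested decreasing chains of open sets in a second-countable space do not obviously admit countable cofinal subchains, since a point can sit on the boundary of many $P_\alpha$ without being witnessed by any basic open set. Some care is therefore needed here, for instance by indexing via basic opens that eventually become excluded, or by refining the subchain using relatively compact nested pairs of basis elements. A more robust alternative would be to bypass Zorn entirely by well-ordering the chain and constructing $\gamma_*$ directly via transfinite induction, invoking the limit curve theorem at successor stages and using compactness of Cauchy slices to control limit stages.
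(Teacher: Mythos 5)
Your strategy is essentially the one the paper itself sketches in the remark following its main proof: apply Zorn's lemma (there to $\mathcal{C}^+$ ordered by reverse inclusion, here to the TIPs inside a fixed $P$), bound each chain by extracting a countable cofinal subchain and applying the limit curve theorem to timelike representatives intersecting a compact Cauchy surface, and conclude via Lemma \ref{lem:limitpast}. The paper's main proof is different in spirit: it fixes a finite Borel measure $\mu$ of full support, which by Lemma \ref{lem:mu} is strictly monotone on nested past sets, and then runs a direct recursive construction of a nested sequence $U_n$ of TIPs whose measures converge to the infimum $\nu_n$ of measures of TIPs inside $U_n$; the interior of $\bigcap_n U_n$ is shown to be a TIP (again via compactness of the Cauchy slices and the limit curve theorem) and is minimal because any strictly smaller TIP would undercut some $\nu_n$. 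That route avoids Zorn and transfinite bookkeeping entirely, at the cost of introducing the measure; your route is structurally cleaner but, as you correctly sense, hits a set-theoretic snag.

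The snag you flag is the one genuine gap: second countability alone does not \emph{obviously} yield a countable cofinal subchain of a totally ordered family of open sets (it is in fact true for separable metric spaces, via hereditary separability and the nonexistence of uncountable left-separated subspaces, but that is not a one-line argument). The clean fix is exactly the measure the paper uses. Fix a finite Borel measure $\mu$ of full support and set $\nu:=\inf_\alpha \mu(P_\alpha)$. Choose $\alpha_j$ with $\mu(P_{\alpha_j})<\nu+1/j$, arranged decreasingly using the total order. For any $\alpha$ with $\mu(P_\alpha)>\nu$ there is $j$ with $\mu(P_{\alpha_j})<\mu(P_\alpha)$, whence $P_{\alpha_j}\subsetneq P_\alpha$ by comparability and monotonicity of $\mu$. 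Any two chain elements of measure exactly $\nu$ are nested with equal measure, hence equal by Lemma \ref{lem:mu}; so at most one exceptional element exists, and appending it makes the countable subfamily cofinal. With that, your limit curve $\gamma_*$ satisfies $I^-(\gamma_*)\subseteq P_{\alpha_j}$ for all $j$ by Lemma \ref{lem:limitpast}, hence $I^-(\gamma_*)\subseteq P_\alpha$ for all $\alpha$, and Zorn applies. The remaining steps of your argument (normalization to start on $\Sigma$, inextendibility of the limit curve, the openness-of-$\ll$ verification, and time reversal for TIFs) are all correct.
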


Notice that Minkowski spacetime constitutes a counterexample when the assumption of compactness of the Cauchy surface is dropped. As a first application, we show that when a minimal TIP and a minimal TIF meet, then the spacetime contains a timelike line. That is, an inextendible maximizing timelike geodesic.

\begin{defn}
 We say that a spacetime $(M,g)$ satisfies the \emph{min-min condition} if it contains a minimal TIP and a minimal TIF whose intersection is non-empty.
\end{defn}

\begin{thm} \label{thm:line}
 Let $(M,g)$ be globally hyperbolic with compact Cauchy surfaces. If $(M,g)$ satisfies the min-min condition, then $(M,g)$ contains a timelike line.
\end{thm}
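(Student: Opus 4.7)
The plan is to realize the timelike line as a locally uniform limit of maximizing causal geodesic segments whose endpoints escape to past and future infinity, and then to invoke the min-min condition to prevent the limit from degenerating to a null geodesic. Write $P = I^-(\gamma)$ for the minimal TIP and $F = I^+(\sigma)$ for the minimal TIF, and fix $q \in P \cap F$. Along the past-inextendible curve $\sigma$ choose $p_n$ going to past infinity with $p_n \ll q$ (using push-up: once some $p_0 \in \sigma$ satisfies $p_0 \ll q$, all $p_n$ further back on $\sigma$ satisfy $p_n \leq p_0 \ll q$, hence $p_n \ll q$), and along $\gamma$ choose $q_n$ going to future infinity with $q \ll q_n$. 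Global hyperbolicity then provides maximizing causal geodesics $\lambda_n$ from $p_n$ to $q_n$, each of positive length at least $d(p_n,q) + d(q,q_n)$, so in particular each $\lambda_n$ is timelike. Fix a compact Cauchy surface $\Sigma$; each $\lambda_n$ meets $\Sigma$ at a point $r_n$, and by compactness a subsequence converges to some $r \in \Sigma$. The standard limit curve lemma yields a further subsequence converging locally uniformly to an inextendible causal curve $\lambda$ through $r$, with inextendibility forced by the fact that $p_n, q_n$ leave every compact set. Continuity of the Lorentzian distance in the globally hyperbolic setting, together with the fact that each $\lambda_n$ realizes $d$ on every subsegment, implies that $\lambda$ is itself a maximizer between any two of its points.

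The crux is to show $\lambda$ is timelike, and this is where minimality enters. As the past of an inextendible causal curve, $I^-(\lambda)$ is a TIP, and I claim $I^-(\lambda) \subseteq P$. Indeed, any $x \in I^-(\lambda)$ satisfies $x \ll y$ for some $y \in \lambda$; local uniform convergence and openness of $I^+(x)$ produce $y_n \in \lambda_n$ with $y_n \to y$ and $x \ll y_n$, and since $y_n$ lies in the interior of the timelike curve $\lambda_n$ we get $y_n \ll q_n \in \gamma$, forcing $x \in I^-(\gamma) = P$. Minimality of $P$ now upgrades this inclusion to $I^-(\lambda) = P$, and symmetrically $I^+(\lambda) = F$. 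In particular $q \in P \cap F \subseteq I^-(\lambda) \cap I^+(\lambda)$, so there exist $x, y \in \lambda$ with $x \ll q \ll y$ and hence $x \ll y$. Global hyperbolicity excludes closed timelike curves, so $x$ precedes $y$ along $\lambda$, and the subsegment of $\lambda$ from $x$ to $y$ is a maximizing causal curve of length $d(x,y) > 0$, hence timelike. A maximizing Lorentzian curve is a geodesic with constant causal character, so $\lambda$ is timelike throughout; being inextendible and maximizing, it is a timelike line.

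I expect the main obstacle to be the technical bookkeeping in the limit-curve step: verifying that the limit $\lambda$ is inextendible (which requires $p_n, q_n$ to escape every compact set, and in turn uses that $\sigma, \gamma$ are inextendible in a spacetime with compact Cauchy surfaces) and that maximality on every subsegment passes to the $C^0$-limit (an application of upper semicontinuity of Lorentzian length and continuity of $d$ in the globally hyperbolic setting). The conceptual heart of the argument, by contrast, is short and pleasant: the min-min condition supplies a witness $q$ which, after minimality forces $I^-(\lambda) = P$ and $I^+(\lambda) = F$, produces a pair of chronologically related points on $\lambda$; this is precisely what rules out the limit being null and turns a generic inextendible maximizer into a genuine timelike line.
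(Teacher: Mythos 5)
Your proposal is correct and takes essentially the same route as the paper: both construct maximizing timelike geodesics between points receding to past infinity along the curve representing the minimal TIF and to future infinity along the curve representing the minimal TIP, extract an inextendible maximizing limit curve through a compact Cauchy surface, and use minimality to identify the limit's past and future with the given TIP and TIF, so that the limit is not achronal and hence timelike. The only cosmetic difference is that you rederive the inclusion $I^-(\lambda) \subseteq P$ by hand, where the paper invokes its Lemma \ref{lem:limitpast}.
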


The min-min condition is a generalization of the no observer horizon condition (NOH). Physically speaking, the Universe contains observer horizons if there are observers that never recieve signals from certain spacetime regions, the horizon being the boundary of the visible region. Signals from beyond the horizon never arrive, either because the Universe is expanging faster than the speed of light, or because the observer accelerates away and approaches the speed of light. Mathematically, the NOH states that $I^\pm(\gamma) = M$ for every inextendible causal curve; equivalently, the past and future causal boundary is a single point, $\CB^\pm = \{M\}$. In particular, it then follows that $M$ is a minimal TIP and a minimal TIF.

Costa e Silva, Flores and Herrera \cite{CSFH1} showed that every spacetime with compact Cauchy surfaces that admits a complete timelike conformal Killing vector field $X$ satisfies the NOH (see also the earlier, related work of Galloway and Vega \cite{GaVe3}). We generalize this result to the case where the spacetime has a timelike conformal symmetry (which include, among others, the flows of complete conformal Killing vector fields). The proof in \cite{CSFH1} is based on a special decomposition of the metric obtained earlier \cite{CFS,JaSa} using a temporal function with gradient along the conformal Killing field. One can then, in some sense, explicitly compute the causal boundary. Our proof is completely different, and uses the notion of minimal TIP in a crucial way. Note also that Galloway obtained a similar result \cite[Thm.~4.1]{GalSym} to our Theorem \ref{thm:conf} below, but using a different definition of discrete timelike symmetry (see also references in \cite[Sec.~4]{GalSym} for other treatments of discrete timelike symmetries).

\begin{defn} \label{def:conf}
 Let $(M,g)$ be a spacetime. A \emph{timelike conformal transformation} is a diffeomorphism $\phi : M \to M$ such that $\phi^* g = \Omega g$ for some positive function $\Omega \in \mathcal{C}^\infty(M)$ and $x \ll \phi(x)$ for every $x \in M$.
\end{defn}

\begin{thm} \label{thm:conf}
 Let $(M,g)$ be a connected chronological spacetime admitting a timelike conformal transformation. Then $M$ is a TIP and a TIF of $(M,g)$, i.e. $M \in \CB^\pm$. If, additionally, $(M,g)$ is globally hyperbolic with compact Cauchy surfaces, then the future and past causal boundaries consist of a single point, $\CB^\pm = \{M\}$, and $(M,g)$ contains a timelike line.
\end{thm}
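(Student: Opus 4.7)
The plan is to handle the two halves of the statement in turn. For the first assertion, fix any $p \in M$ and thread a doubly inextendible timelike curve $\gamma : \real \to M$ through the orbit $(\phi^n(p))_{n \in \bZ}$; this orbit is a timelike chain by the pointwise hypothesis $x \ll \phi(x)$. Chronology forbids fixed points of $\phi$ (a fixed point would satisfy $x^* \ll x^*$), so $\gamma$ cannot accumulate at an endpoint: any future (resp.\ past) limit $x^*$ of the orbit would satisfy $\phi(x^*) = x^*$ by continuity of $\phi$. Thus $\gamma$ is inextendible in both directions, and $I := I^-(\gamma) = \bigcup_n I^-(\phi^n(p))$ is a nonempty open past set, $\phi$-invariant via $\phi(I^-(\phi^n(p))) = I^-(\phi^{n+1}(p))$.

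To show $I = M$, suppose for contradiction $\partial I \neq \emptyset$ (using connectedness of $M$) and pick $x \in \partial I$. Then $\phi(x) \in \partial I$ as well, since $\phi$ is a homeomorphism preserving $I$, so there exist $y_k \in I$ with $y_k \to \phi(x)$. Because $x \ll \phi(x)$ and $I^+(x)$ is open, $x \ll y_k$ eventually; but $I$ is a past set, hence $x \in I$, contradicting $x \in \partial I$ (using that $I$ is open). Therefore $I = M$, so $M$ is a TIP. The argument for $M \in \CB^-$ is entirely symmetric, applied to $F := I^+(\gamma)$ and using $\phi^{-1}(x) \ll x$.

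For the second half, assume global hyperbolicity with compact Cauchy surfaces and apply Theorem \ref{thm:mTIP} to extract a minimal TIP $P = I^-(\mu)$. The pointwise relation $\mu(s) \ll \phi(\mu(s))$ gives $P \subset \phi(P)$, i.e.\ $\phi^{-1}(P) \subset P$; since $\phi^{-1} \circ \mu$ is also future inextendible, $\phi^{-1}(P)$ is itself a TIP, and minimality of $P$ forces $\phi(P) = P$. Now fix any $p \in P$: the entire orbit $\{\phi^n(p)\}_{n \in \bZ}$ lies in $P$, and because $P$ is a past set, $I^-(\phi^n(p)) \subset P$ for each $n$. But the first half applied to this $p$ yields $\bigcup_n I^-(\phi^n(p)) = M$, so $P = M$. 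Hence any minimal TIP equals $M$; Theorem \ref{thm:mTIP} then forces every TIP to contain $M$, so $\CB^+ = \{M\}$. The TIF case is symmetric, and then the min-min condition holds trivially with intersection $M$, so Theorem \ref{thm:line} produces a timelike line.

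The most delicate step is the boundary argument for $I = M$ in the first half. In a merely chronological spacetime we lack compactness and global time functions, so the only leverage available is the interplay between $\phi$-invariance of $\partial I$ and the pointwise hypothesis $x \ll \phi(x)$: this combination is exactly what transports $I$-membership from cluster points $y_k$ near $\phi(x)$ back to $x$ itself. Once $M$ is realized as a TIP, the reduction of the causal boundary to a single point is essentially formal, driven by minimality of $P$ together with the observation that $\phi^{-1}$ always sends a TIP to a TIP contained in the original.
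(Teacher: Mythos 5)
Your proposal is correct and follows essentially the same route as the paper: the orbit $(\phi^n(p))$ threaded by an inextendible timelike curve (no fixed points by chronology), the $\phi$-invariance of $\partial I$ combined with $x \ll \phi(x)$ to force $\partial I = \emptyset$, and then minimality plus $\phi^{-1}(P) \subseteq P$ being a TIP to get $\phi$-invariance of the minimal TIP. The one step you gloss over is that $\phi$ preserves the time orientation: a conformal diffeomorphism a priori could reverse it, which would break the identities $\phi\bigl(I^-(x)\bigr) = I^-(\phi(x))$ and the claim that $\phi^{-1} \circ \mu$ is \emph{future}-inextendible. The paper isolates this as Lemma \ref{lem:timeor}, where it is deduced from $x \ll \phi(x)$ together with chronology (an orientation-reversing $\phi$ would produce $\phi(p) \ll \phi^2(p) \ll \phi(p)$); you should add this short argument, after which your proof is complete. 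Your way of finishing the second half --- observing that the orbit of any $p \in P$ stays in the $\phi$-invariant past set $P$, so $P \supseteq \bigcup_n I^-(\phi^n(p)) = M$ by the first half --- is a slightly cleaner alternative to the paper's re-run of the boundary argument, but the substance is identical.
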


The existence of a timelike line is well-known as one of the assumptions of the Lorentzian splitting theorem \cite{Esch} (see also \cite{GalSplit,New}), leading to the following corollary.

\begin{cor} \label{cor:Bartnik}
 Let $(M,g)$ be a connected spacetime satisfying:
 \begin{enumerate}
  \item global hyperbolicity with compact Cauchy surfaces,
  \item the timelike convergence condition: $\Ric(X,X) \geq 0$ for all timelike $X$,
  \item timelike geodesic completeness,
  \item the min-min condition or existence of a timelike conformal transformation.
 \end{enumerate}
Then $(M,g)$ is isometric to a product $(\real \times S, -dt^2+h)$, with $h$ a Riemannian metric on $S$ independent of $t$.
\end{cor}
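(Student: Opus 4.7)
The plan is to reduce the statement to the Lorentzian splitting theorem of Eschenburg, Galloway, and Newman \cite{Esch,GalSplit,New}, which asserts that a globally hyperbolic, timelike geodesically complete spacetime satisfying the timelike convergence condition and containing a timelike line splits as an isometric product $(\real \times S, -dt^2+h)$ with $h$ independent of $t$. Three of the four hypotheses of that theorem are supplied directly by (i), (ii), and (iii); the only missing ingredient is a timelike line.

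To produce such a line, I would split on hypothesis (iv). If $(M,g)$ satisfies the min-min condition, then together with (i) the assumptions of Theorem~\ref{thm:line} are fulfilled, and that theorem produces a timelike line directly. If instead $(M,g)$ admits a timelike conformal transformation, I note that global hyperbolicity implies strong causality and hence chronology, and connectedness is assumed; therefore Theorem~\ref{thm:conf} applies and again yields a timelike line (in fact more, namely $\CB^\pm = \{M\}$). In either case, feeding the resulting line into the Lorentzian splitting theorem gives the desired isometric product decomposition, with $S$ necessarily compact since the Cauchy surfaces of $(M,g)$ are compact.

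The corollary is essentially a packaging of the earlier theorems of the paper together with a classical result, so I do not anticipate any genuine obstacle: the substantive work has already been done upstream in Theorems~\ref{thm:line} and~\ref{thm:conf}, and one only needs the minor compatibility checks noted above (chronology from global hyperbolicity, plus the causal/completeness hypotheses of the splitting theorem, all of which are assumed).
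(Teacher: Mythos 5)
Your proposal is correct and matches the paper's (implicit) argument exactly: the paper derives the corollary by obtaining a timelike line from Theorem~\ref{thm:line} in the min-min case or from Theorem~\ref{thm:conf} in the conformal-transformation case (chronology being supplied by global hyperbolicity), and then invoking the Lorentzian splitting theorem. No issues.
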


This corollary constitutes a special case of the Bartnik splitting conjecture \cite[Conj.~2]{Bar}, which is the same statement but assuming only (i)-(iii). Bartnik's conjecture can be understood as a rigid generalization of the Hawking--Penrose singularity theorem \cite{GalRev}. The latter states that a spacetime with compact Cauchy surfaces that satisfies the timelike convergence condition and the \emph{generic} condition must be causally geodesically incomplete. Bartnik's splitting conjecture thus asserts that when removing the generic assumption, the only exception to incompleteness occurs in the ``rigid'' case of a product spacetime. The conjecture can also be given a similar interpretation as a rigid Hawking singularity theorem, where one has dropped the assumption on the mean curvature of the Cauchy surface. In fact, the conjecture is closely related to the question of when a spacetime contains a constant mean curvature (CMC) Cauchy surface. This is because under assumptions (i)-(iii), a CMC with mean curvature $0$ leads to splitting of the spacetime \cite[Cor.~2]{Bar}, while non-vanishing constant mean curvature would violate timelike completeness by Hawking's singularity theorem.

The Bartnik splitting conjecture has so far been proven under various extra assumptions, usually by employing them to verify the no observer horizons condition (NOH), which then reduces the problem to standard arguments \cite{ASB,Bar,CSFH1,CSFH2,EhGa,GalWarsaw,GaVe1,GaVe2,GaVe3,ShBa,Tip}. In fact, proving splitting is equivalent to proving the NOH, since a rigid product always satisfies it. Ehrlich and Galloway \cite{EhGa}, however, gave an example of a NOH-violating spacetime that satisfies (i) and (iii). Thus the difficulty remains of establishing whether the curvature condition (ii) forces the NOH to be satisfied. It is known that non-positive timelike sectional curvature (which is stronger than (ii)) is sufficient \cite{EhGa, GaVe2}.

\begin{struc}
 In Section \ref{sec:prelim}, we give some necessary background on the causal boundary. Then we prove Theorem \ref{thm:mTIP} in Section \ref{sec:mTIP}. In Section \ref{sec:disc}, we prove Theorem \ref{thm:line} and discuss it in the context of known splitting results. Finally, Section \ref{sec:conf} contains the proof of Theorem \ref{thm:conf} and three examples: two of discrete timelike conformal transformation, and one of a NOH-violating spacetime that admits a complete causal conformal Killing vector field.
\end{struc}

\section{Preliminaries about the causal boundary} \label{sec:prelim}

Throughout the manuscript, $(M,g)$ denotes a smooth spacetime. We assume a certain familiarity with standard causal theory of spacetimes (see \cite{MinRev} for a review), but not with the causal boundary. Recall that the past $I^-(A)$ of a set $A \subseteq M$ is defined as
\[
 I^-(A) := \{ p \in M \mid \exists q \in A \text{ such that } p \in I^-(q) \}.
\]
Analogously we can define the future of a set, and also the rest of this section can be formulated in a time reversed fashion. We omit these time reversed statements for brevity.
\begin{defn}[\cite{GKP}] \label{def:IP}
 A subset $U \subseteq M$ is called:
 \begin{enumerate}
  \item A \emph{past set} if $I^-(U) = U$.
  \item An \emph{indecomposable past set (IP)} if it is a past set that cannot be written as the union of two proper past subsets.
  \item A \emph{terminal indecomposable past set (TIP)} if it is an indecomposable past set that is \underline{not} of the form $I^-(p)$ for any $p \in M$.
 \end{enumerate}
\end{defn}

Notice that the past $I^-(p)$ of a point $p$ can also be thought of as the past $I^-(\gamma)$ of any future directed causal curve $\gamma$ ending at $p$. Similarly, the following lemma tells us that TIPs can be thought of as the ``endpoints'' of inextendible causal curves. Note that a causal curve $\gamma : (a,b) \to M$ is future-inextendible if and only if $\lim_{s \to b} \gamma(s)$ does not exist.

\begin{lem}[{\cite[Thms.~2.1 \& 2.3]{GKP}}] \label{lem:TIP}
 Let $(M,g)$ be a distinguishing spacetime. For any $U \subseteq M$, the following are equivalent:
 \begin{enumerate}
  \item $U$ is a TIP.
  \item $U = I^-(\gamma)$ for a future-inextendible timelike curve $\gamma$.
  \item $U = I^-(\gamma)$ for a future-inextendible causal curve $\gamma$.
 \end{enumerate}
\end{lem}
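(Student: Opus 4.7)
I would prove the three equivalences cyclically via (ii) $\Rightarrow$ (iii) $\Rightarrow$ (i) $\Rightarrow$ (ii), with only the last direction requiring real work.

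The implication (ii) $\Rightarrow$ (iii) is immediate since every timelike curve is causal. For (iii) $\Rightarrow$ (i), the set $U = I^-(\gamma)$ is trivially a past set. It is not of the form $I^-(p)$: in a distinguishing spacetime, an equality $I^-(\gamma) = I^-(p)$ would force $p$ to be a future endpoint of $\gamma$ (points there being determined by their pasts), contradicting inextendibility. For indecomposability, parametrise $\gamma : [0,b) \to M$ and suppose $U = A \cup B$ with $A, B$ proper past subsets. Picking $s_n \to b$, pigeonhole puts infinitely many $\gamma(s_n)$ into one summand, say $A$; then $\bigcup_n I^-(\gamma(s_n)) \subseteq A$, but this union equals $U$ since each point of $U$ lies chronologically below some $\gamma(s_n)$, forcing $A = U$.

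The substantive direction is (i) $\Rightarrow$ (ii). The key tool is the classical GKP characterisation: an open past set $U$ is an IP precisely if it is chronologically directed, i.e.\ for any $a, b \in U$ there exists $c \in U$ with $a, b \ll c$. Given a TIP $U$, I would fix a countable dense sequence $\{p_n\}_{n\in\nat} \subseteq U$ (available since $M$ is second countable) and construct inductively a chain $q_1 \ll q_2 \ll \ldots$ in $U$ with $p_n \ll q_n$ for every $n$: at step $n$, apply directedness to $\{q_{n-1}, p_n\}$ to obtain $q_n \in U$ dominating both. Concatenating timelike segments from $q_n$ to $q_{n+1}$ then yields a future directed timelike curve $\gamma$ contained in $U$.

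Two checks remain. First, $I^-(\gamma) = U$: the inclusion $\subseteq$ is immediate because all $q_n \in U$ and $U$ is a past set; for $\supseteq$, given $p \in U$, directedness applied to $p,p$ produces some $q \in U$ with $p \ll q$, so $I^+(p) \cap U$ is a nonempty open subset of $U$ and hence contains some $p_n$ by density, whence $p \ll p_n \ll q_n \in \gamma$. Second, $\gamma$ is future-inextendible: a future endpoint $q$ would give $U = I^-(\gamma) = I^-(q)$, contradicting that $U$ is a TIP. The main obstacle I anticipate is the chain construction itself, in particular ensuring that directedness can be iterated while staying inside $U$ and that the concatenation actually assembles into a bona fide future-inextendible timelike curve (rather than one limiting to an interior endpoint); both points reduce to a careful appeal to the GKP directedness characterisation and to the TIP property blocking any future endpoint.
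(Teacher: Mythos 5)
The paper does not actually prove this lemma: it is imported verbatim from Geroch--Kronheimer--Penrose \cite{GKP}, so there is no internal proof to compare against. Your overall route is the classical one: the cyclic scheme, the directedness characterisation of IPs, and the construction of an increasing chain $q_1 \ll q_2 \ll \cdots$ cofinal in $U$ via a countable dense subset is essentially how (i) $\Rightarrow$ (ii) is proved in the literature, and that part of your argument is sound (granting the directedness criterion, which you correctly flag as the key imported tool).

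There is, however, a genuine gap in your proof of indecomposability in (iii) $\Rightarrow$ (i). You pigeonhole the points $\gamma(s_n)$ into one of the two summands $A$, $B$ of a putative decomposition $U = A \cup B$. This tacitly assumes $\gamma(s_n) \in U = I^-(\gamma)$, which is true for timelike $\gamma$ but false in general for causal $\gamma$: an achronal null geodesic $\gamma$ in Minkowski space satisfies $\gamma \cap I^-(\gamma) = \emptyset$, its past being an open half-space that does not contain the curve. The repair is standard: show directly that $I^-(\gamma)$ is chronologically directed for any future-directed causal curve. Given $a,b \in I^-(\gamma)$, use push-up to find a common parameter $s$ with $a,b \ll \gamma(s)$, pick $t>s$; since $\gamma(s) \in J^-(\gamma(t)) \subseteq \overline{I^-(\gamma(t))}$, the open set $I^+(a) \cap I^+(b)$, which contains $\gamma(s)$, meets $I^-(\gamma(t))$, and any point $c$ in the intersection lies in $I^-(\gamma)$ with $a,b \ll c$. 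Then apply the same directedness criterion you already invoke for (i) $\Rightarrow$ (ii). (Alternatively, keep the pigeonhole argument but use it only for (ii) $\Rightarrow$ (i), and recover (iii) from the equivalence of (i) and (ii).) A smaller remark: the claim that $I^-(\gamma) = I^-(p)$ forces $p$ to be a future endpoint of $\gamma$ in a distinguishing spacetime is itself the content of \cite[Thm.~2.1]{GKP} and requires more than the injectivity of $p \mapsto I^-(p)$; as written, your parenthetical justification does not establish it.
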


Motivated by this result, we call the set of all TIPs the \emph{future causal boundary} $\CB^+$ of $(M,g)$. Analogously, there is a past causal boundary $\CB^-$ made of TIFs (terminal indecomposable future-sets, which can be identified as $I^+(\gamma)$ for past-inextendible causal curves $\gamma$). In general, one might want to identify some TIPs with some TIFs as representing the same ``ideal point''. For example, if $(M,g)$ is a spacetime and $p \in M$ a point, the set $I^-(p)$ (resp.\ $I^+(p)$) constitutes a TIP (resp.\ TIF) in the spacetime $(M \setminus \{p \}, g\vert_{M \setminus \{p\}})$, but one should think of $I^+(p)$ and $I^-(p)$ as both representing the same missing point $p$. How to perform these identifications correctly on an arbitrary spacetime (and how to equip the ensuing causal boundary with a well-behaved topology) is a non-trivial question that has received considerable attention in the literature, see e.g.\ \cite{BFH,CSFH3,Flo,FHS,Har,MaRo} and references therein. In the present paper, however, we restrict to globally hyperbolic spacetimes, where no identifications between $\CB^+$ and $\CB^-$ are required \cite[Thm.~3.29]{FHS}, and moreover, we will not need a topology at all. We do make use of the following natural relation on the causal boundary.

\begin{defn}[\cite{GKP}] \label{def:orderCB}
 Let $U,V$ be TIPs. We say that a $U$ is \emph{in the past of} $V$ if $U \subseteq V$. Moreover, we say that the future causal boundary is \emph{spacelike} if $U \subseteq V \implies U = V$.
\end{defn}

Again, the definition is motivated by an analogy to the case of two points, where $p$ being in the past of $q$ implies $I^-(p) \subseteq I^-(q)$, the converse being true on causally simple spacetimes. Recall that $I^-(p)$ is open for every $p \in M$. It follows that every past set $U$ is open, as it is the union of open sets. The closure $\overline{U}$ enjoys the following property, which will be useful in multiple parts of the paper.

\begin{lem} \label{lem:limitpast}
 If $U \subseteq M$ is a past set and $p \in \overline{U}$, then $I^-(p) \subseteq U$. In particular, if $(\gamma_n)_n$ is a sequence of causal curves in $U$ converging to a limit curve $\gamma$, then $I^-(\gamma) \subseteq U$.
\end{lem}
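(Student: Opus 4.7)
The plan is to prove the first (main) assertion by a short openness argument, and then deduce the ``in particular'' consequence as an immediate corollary.

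For the main statement, I would fix an arbitrary $q \in I^-(p)$ and aim to show $q \in U$. Since $q \ll p$, the chronological future $I^+(q)$ is open and contains $p$. Because $p \in \overline{U}$, this open neighborhood must intersect $U$, yielding a point $r \in U$ with $q \ll r$. Now I invoke the defining property of a past set---namely $I^-(U) = U$, not merely $I^-(U) \subseteq U$---to conclude
\[
 q \in I^-(r) \subseteq I^-(U) = U.
\]
Since $q$ was arbitrary in $I^-(p)$, this gives $I^-(p) \subseteq U$.

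For the second assertion, let $\gamma_n \to \gamma$ as stated. By the definition of limit curve, every point $p \in \gamma$ is a limit of points $p_n \in \gamma_n \subseteq U$, so $p \in \overline{U}$. Applying the first part of the lemma gives $I^-(p) \subseteq U$ for each such $p$, and taking the union over $p \in \gamma$ produces $I^-(\gamma) = \bigcup_{p \in \gamma} I^-(p) \subseteq U$.

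There is essentially no obstacle here; the argument rests only on the openness of the chronological relation and the precise definition of past set. The one point worth flagging is that the inclusion $I^-(r) \subseteq U$ for $r \in U$ uses the equality $I^-(U) = U$ rather than only the obvious inclusion $U \subseteq I^-(U)$ that would hold in a chronological spacetime for points, so the argument is really a statement about past sets rather than about individual points' pasts.
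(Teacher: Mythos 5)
Your argument is correct and is essentially identical to the paper's proof: both use the openness of $I^+(q)$ together with $p \in \overline{U}$ to find a point of $U$ chronologically after $q$, and then invoke $I^-(U) = U$. The paper phrases it with an explicit sequence $p_n \to p$ rather than "the open set $I^+(q)$ meets $U$," but this is the same argument.
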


\begin{proof}
 Let $p_n$ be a sequence in $U$ with $p_n \to p$, and let $x \in I^-(p)$ be arbitrary. Then $p \in I^+(x)$, and since $I^+(x)$ is open, we have $p_n \in I^+(x)$ for all large enough $n$. But then $x \in I^-(p_n)$, which implies $x \in I^-(U) = U$ because $p_n \in U$. The statement about limit curves follows directly from the statement about limit points.
\end{proof}

\section{Existence of minimal TIPs and TIFs} \label{sec:mTIP}

We prove Theorem \ref{thm:mTIP} for TIPs, the version for TIFs being analogous. Since $(M,g)$ contains a Cauchy surface, it is globally hyperbolic and admits a Cauchy temporal function $\tau : M \to \real$ whose level sets define a foliation of $M$ by (in our case compact) Cauchy surfaces \cite{Ger}. Furthermore, we equip $M$ with a finite Borel measure of full support $\mu$ (see e.g.\ \cite[f.n.~24]{Ger} for an existence argument). Then $\mu$ distinguishes past sets (in particular, TIPs) in the following sense.

\begin{lem} \label{lem:mu}
 If $U$ and $V$ are past sets and $U \subseteq V$, then $\mu(U) = \mu(V)$ only if $U=V$.
\end{lem}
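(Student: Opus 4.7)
The plan is to prove the contrapositive: assuming $U \subsetneq V$, I will produce an open set of positive measure sitting inside $V \setminus U$, from which additivity and finiteness of $\mu$ give $\mu(V) > \mu(U)$.

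The mechanism rests on two already-noted facts about past sets: every past set is open, and the defining identity $I^-(U) = U$. From the latter I would first extract the following future-stability of the complement: if $p \notin U$ and $p \ll p'$, then $p' \notin U$, since otherwise $p \in I^-(p') \subseteq I^-(U) = U$, a contradiction. Equivalently, $I^+(p) \cap U = \emptyset$ whenever $p \notin U$.

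With this in hand, I would pick any $p \in V \setminus U$, which exists because $U \subsetneq V$. Since $V$ is itself a past set, the identity $V = I^-(V)$ supplies some $r \in V$ with $p \ll r$. The set $W := I^+(p) \cap V$ is then open (both factors are open), contains $r$, and by the previous observation is disjoint from $U$; hence $\emptyset \neq W \subseteq V \setminus U$. The full-support hypothesis on $\mu$ gives $\mu(W) > 0$, and additivity together with finiteness of $\mu$ yields $\mu(V) \geq \mu(U) + \mu(W) > \mu(U)$.

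I do not foresee a genuine obstacle: the argument is really just a careful packaging of the defining chronological properties of past sets with the full support and finiteness of $\mu$. The only mildly subtle point is that one cannot hope for $V \setminus U$ itself to be open, since a point $p \in V \setminus U$ may well lie on $\partial U$; this is precisely why one steps to the open subset $I^+(p) \cap V$ rather than working with the difference directly.
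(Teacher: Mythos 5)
Your proof is correct and follows essentially the same route as the paper: pick $p \in V \setminus U$, observe that $W := I^+(p) \cap V$ is open, non-empty, and disjoint from $U$, and conclude via additivity and full support that $\mu(V) > \mu(U)$. You merely spell out two steps the paper leaves implicit (non-emptiness of $W$ via $V = I^-(V)$, and disjointness via the future-stability of $M \setminus U$), which is fine.
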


\begin{proof}
 Assume that $\mu(U) = \mu(V)$ but $U \subsetneq V$, so there exists a point $p \in V \setminus U$. Since $V$ is a past-set, it must be open. Therefore $W := V \cap I^+(p)$ is open and non-empty, and moreover $W \subseteq V \setminus U$, since otherwise we would have $p \in U$. But then
 \begin{equation*}
  \mu(V) = \mu(U) + \mu(V \setminus U) \geq \mu(U) + \mu(W) > \mu(U),
 \end{equation*}
 in contradiction to our initial assumption.
\end{proof}

Lemma \ref{lem:mu} shows that the function $\CB^+ \to \real,\ U \mapsto \mu(U)$ is strictly monotonous with respect to the inclusion on $\CB^+$. Our strategy to prove the existence of a minimal TIP (Theorem \ref{thm:mTIP}) is to show that this function attains a minimum. One way to show this is using the Beem topology on the causal boundary and its lower semi-compactness \cite[Thm.~12]{Mue}. However, we instead opt for a direct construction of the minimizer.

\begin{proof}[Proof of Theorem \ref{thm:mTIP}]
 We recursively construct a nested sequence of TIPs as follows. Let $U_0$ be any TIP in our spacetime. Given an element of the sequence $U_n$, let
 \begin{align*}
 \mu_n := \mu(U_n), \quad \nu_n := \inf\{\mu(V) \mid V \subseteq U_n \text{ is a TIP} \}.
 \end{align*}
 Note that $\nu_n \leq \mu_n$. Choose the next sequence element $U_{n+1}$ to be such that
 \begin{equation*}
  \mu_{n+1} \leq \frac{\mu_n + \nu_n}{2}.
 \end{equation*}
 We can do this because $\mu_n \geq \nu_n$ and hence the right hand side is greater than or equal to $\nu_n$, with equality if and only if $\mu_n = \nu_n$, in which case $U_n$ is the sought-after minimal TIP by Lemma \ref{lem:mu}. It remains to treat the case when we obtain an infinite sequence with $\nu_n < \mu_n$ for all $n \in \nat$. Then also $\mu_n > \mu_{n+1}$ and $\nu_n \leq \nu_{n+1}$ (because at each successive step, the infimum is over a smaller set of subsets). We claim that
 \begin{equation*}
  U := \Int \left( \bigcap_{n \in \nat} U_n \right)
 \end{equation*}
 is the desired minimal TIP (here $\Int$ denotes the interior).

First we show that $U$ is a past-set (if it is non-empty). Let $x \in U$ be arbitrary, and $y \in I^-(x)$. Then since $x \in U_n$ for all $n \in \nat$, by transitivity also $y \in I^-(U_n) = U_n$ for all $n \in \nat$. By openness of $I^-(x)$, the same argument applies to all points in a neighborhood of $y$, hence $y \in U$. This proves that $I^-(U) \subseteq U$. On the other hand, $U \subseteq I^-(U)$ is true for every open set.

We proceed to prove that $U$ is non-empty and intersects every $\tau$-level set. Let $\Sigma_0, \Sigma_1$ be the $\tau$-level sets for two arbitrary values $\tau_0 < \tau_1$. Every $U_n$, being a TIP, must intersect the Cauchy surface $\Sigma_1$ by Lemma \ref{lem:TIP} (since we can choose the representing curve to be inextendible). Thus we may choose a sequence $p_n \in \Sigma_1 \cap U_n$, and by compactness of $\Sigma_1$, a subsequence, denoted again by $(p_n)_n$, has a limit $p \in \Sigma_1$. Because $\Sigma_0, \Sigma_1$ are Cauchy and $\tau_0 < \tau_1 = \tau(p)$, it holds that $I^-(p) \cap \Sigma_0 \neq \emptyset$. Choose any $q \in I^-(p) \cap \Sigma_0$. Then $p \in I^+(q)$, and by openness of the chronological future, $p_n \in I^+(q)$ for all large enough $n \in \nat$. Therefore $q \in I^-(p_n)$ for all large enough $n \in \nat$, implying that $q \in U_n$ (because $p_n \in U_n$ and $U_n$ is a past set). Since the $U_n$ are nested, it follows that $q \in \bigcap_n U_n$, and since the same argument applies to every point in a small enough neighborhood of $q$, it follows that $q \in U$. Since we can do this for every $\tau_0 \in \real$ (and suitable $\tau_1$, for instance $\tau_1 = \tau_0 + 1$), it follows that $U$ intersects every $\tau$-level set.

Knowing that $U$ is a past set that intersects every $\tau$-level set, we can construct a sequence $\gamma_k$ of past-directed causal curves in $U$, starting on $\Sigma_k$ and ending on $\Sigma_0$. Since $\Sigma_0$ is compact, a subsequence accumulates at some point, and by the limit curve theorem \cite[Thm.~3.1(1)]{MinLim}, there is a future inextendible limit curve $\gamma$ contained in $\overline{U}$. Then $I^-(\gamma) \subseteq U$ by Lemma \ref{lem:limitpast}.

Finally, we show that $I^-(\gamma)=U$ is the minimal TIP we are looking for. Observe that
\begin{equation*}
 \mu(U) = \lim_{n \to \infty} \mu_n = \lim_{n \to \infty} \nu_n.
\end{equation*}
Thus any TIP $V$ contained in $U$ must have $\mu(V) = \mu(U)$, since otherwise there would be an $n$ such that $\mu(V) < \nu_n$, but at the same time $V \subseteq U \subseteq U_n$, in contradiction to the definition of $\nu_n$. Since, by Lemma \ref{lem:mu}, $\mu(V) = \mu(U) \implies V=U$, we have shown that $I^-(\gamma)=U$ is a minimal TIP.
\end{proof}

\begin{rem}[Alternative proof]
 One can also show the existence of a minimal TIP by applying Zorn's lemma to the future causal boundary $\CB^+$, partially ordered by reverse inclusion $\supseteq$. To do this, one needs to show that every chain has an upper bound, i.e., that for every totally ordered subset $\{U_\alpha\}_{\alpha \in I}$ of $(\CB^+,\supseteq)$ (where $I$ can be uncountable), there is a TIP $U$ with $U_\alpha \supseteq U$ for all $\alpha \in I$. This can be done as follows. For every $U_\alpha$, there is an inextendible causal curve $\gamma_\alpha$ such that $U_\alpha = I^-(\gamma_\alpha)$. By using a finite measure as in the proof above, we can extract a countable sequence $V_j = U_{\alpha_j}$, such that for every $\alpha \in I$ there is a $j \in \nat$ such that $U_\alpha \supseteq V_j$. Then, we can apply the limit curve theorem to the sequence $(\gamma_{\alpha_j})_j$ (because all these curves intersect a compact Cauchy surface), and obtain a limit curve $\gamma$. Finally, $U = I^-(\gamma)$ is the desired upper bound of $\{U_\alpha\}_{\alpha \in I}$.
\end{rem}

\section{The min-min condition and spacetime splitting} \label{sec:disc}

We first give a direct proof of Theorem \ref{thm:line}, that is, that the min-min condition implies the existence of a timelike line. Below, we also give an alternative proof using a result of Galloway \cite[Thm.~4.1]{GalWarsaw}. In the rest of the section, we prove implications between min-min and other conditions from the literature, and provide examples.

\begin{proof}[Proof of Theorem \ref{thm:line}]
 Let $\gamma_1 : \real \to M$ and $\gamma_2 : \real \to M$ be future-directed timelike curves such that $I^-(\gamma_1)$ is a minimal TIP, $I^+(\gamma_2)$ is a minimal TIF, and $I^-(\gamma_1) \cap I^+(\gamma_2) \neq \emptyset$. The intersection being non-empty means that we may assume, without loss of generality, that $\gamma_1(1) \in I^+(\gamma_2(-1))$ and $\tau(\gamma_2(-1)) < 0 < \tau(\gamma_1(1))$. Therefore $\gamma_2(-n)$ is in the past of $\gamma_1(n)$ for every $n \in \nat$.

 Because $(M,g)$ is globally hyperbolic, for every $n$ there exists a length maximizing timelike geodesic $\sigma_n$ from $\gamma_2(-n)$ to $\gamma_1(n)$. For each $n$, $\sigma_n$ must intersect the level set $\Sigma_0 := \tau^{-1}(0)$, which is compact. Therefore the curves $\sigma_n$ accumulate at some point of $\Sigma_0$, and by the limit curve theorem and the lower semicontinuity of the length functional \cite[Thm.~3.1(1)]{MinLim}, the sequence $\sigma_n$ converges to a limit curve $\sigma$ which is an inextendible, length maximizing causal geodesic (i.e.\ a line).

 It remains to show that $\sigma$ is timelike. Since each $\sigma_n$ lies in $I^-(\gamma_1)$, it follows from Lemma \ref{lem:limitpast} that $I^-(\sigma) \subseteq {I^-(\gamma_1)}$. But because $I^-(\gamma_1)$ is a minimal TIP, we must in fact have $I^-(\sigma) = I^-(\gamma_1)$. Similarly, $I^+(\sigma) = I^+(\gamma_2)$. It follows that $I^-(\sigma) \cap I^+(\sigma) = I^-(\gamma_1) \cap I^-(\gamma_2) \neq \emptyset$. Hence $\sigma$ is not achronal, so it cannot be a null line, and must therefore be timelike.
\end{proof}


In \cite[Thm.~4.1]{GalWarsaw}, Galloway proves the Bartnik splitting conjecture under an additional hypothesis known as the ray-to-ray condition (defined below). Recall that, given a Cauchy surface $S$, a future $S$-ray is a future-inextendible geodesic starting on $S$ that maximizes the Lorentzian distance to $S$. Similarly, one defines past $S$-rays. One can show using a limit curve argument that compact Cauchy surface $S$ always admits a future $S$-ray and a past $S$-ray, but the starting points of these need not be the same. If they do coincide, the concatenation of the two rays is a line. A weaker assumption in order to obatin a line is the following.

\begin{defn}
 A spacetime $(M,g)$ is said to satisfy the \emph{ray-to-ray} condition if it contains a Cauchy hypersurface $S$, a future $S$-ray $\gamma$, and a past $S$-ray $\sigma$ such that $I^-(\gamma) \cap I^+(\sigma) \neq \emptyset$.
\end{defn}

\begin{prop} \label{prop:ray}
 Let $S$ be a compact Cauchy surface in a spacetime $(M,g)$, and let $U$ be a minimal TIP. Then $U = I^-(\gamma)$ for some future $S$-ray $\gamma$. Similarly, every TIF can be represented as the future of a past $S$-ray. In particular, the min-min condition implies the ray-to-ray condition.
\end{prop}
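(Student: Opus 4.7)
The plan is to produce the desired $S$-ray as a limit curve of maximizing segments. By Lemma~\ref{lem:TIP}, write the minimal TIP as $U = I^-(\gamma_0)$ for some future-inextendible timelike curve $\gamma_0$, and pick a sequence of points $p_n$ on $\gamma_0$ with $\tau(p_n) \to \infty$, where $\tau$ is a Cauchy temporal function. Global hyperbolicity provides, for each $n$, a maximizing causal geodesic $\sigma_n$ from a point $q_n \in S$ to $p_n$. Compactness of $S$ lets me pass to a subsequence with $q_n \to q \in S$, and then the limit curve theorem yields a future-inextendible causal geodesic $\gamma$ starting at $q$ that is a $C^0$-limit of (a subsequence of) the $\sigma_n$.

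Two things remain to verify: that $\gamma$ is $S$-maximizing, and that $I^-(\gamma) = U$. For $S$-maximality, the key observation is that each restriction $\sigma_n|_{[0,t]}$ already realizes $d(S, \sigma_n(t))$, for otherwise one could lengthen $\sigma_n$ itself. Passing to the limit, continuity of the Lorentzian distance on a globally hyperbolic spacetime together with convergence of lengths along a $C^0$-convergent sequence of causal geodesics gives $L(\gamma|_{[0,t]}) = d(S,\gamma(t))$ for every $t > 0$, so $\gamma$ is a future $S$-ray. For the identification $I^-(\gamma) = U$, I would use that $p_n \in I^-(p_{n+1}) \subseteq U$, so $\sigma_n \subseteq J^-(p_n) \subseteq U$; hence $\gamma \subseteq \overline{U}$ and Lemma~\ref{lem:limitpast} gives $I^-(\gamma) \subseteq U$. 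Since $\gamma$ is future-inextendible, $I^-(\gamma)$ is itself a TIP, and the minimality of $U$ forces equality.

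The TIF statement is the time-dual of the above. The ray-to-ray conclusion then follows immediately: given a minimal TIP $U$ and a minimal TIF $V$ with $U \cap V \neq \emptyset$, I realize them as $U = I^-(\gamma)$ for a future $S$-ray $\gamma$ and $V = I^+(\sigma)$ for a past $S$-ray $\sigma$; then $I^-(\gamma) \cap I^+(\sigma) = U \cap V$ is non-empty, which is precisely the ray-to-ray condition. The main obstacle I anticipate is ensuring that the limit curve theorem genuinely produces a future-inextendible $\gamma$ (rather than one terminating at finite parameter) and that length converges along the $C^0$-convergent maximizers; both points are standard in Lorentzian geometry but deserve care in the choice of parametrization.
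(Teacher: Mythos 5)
Your proposal is correct and follows essentially the same route as the paper: represent the minimal TIP by an inextendible timelike curve, connect a compact Cauchy surface $S$ to points escaping to infinity along it by maximizing geodesics, extract an $S$-ray $\gamma$ via compactness of $S$ and the limit curve theorem, and conclude $I^-(\gamma)=U$ from Lemma~\ref{lem:limitpast} together with minimality. The only (harmless) imprecision is the inclusion $J^-(p_n)\subseteq U$, which should read $J^-(p_n)\subseteq\overline{U}$; this still yields $\gamma\subseteq\overline{U}$, which is all that Lemma~\ref{lem:limitpast} requires.
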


\begin{proof}
 We prove the statement for TIPs, writing $U = I^-(\sigma)$ for some timelike curve $\sigma \colon \real \to M$. We may assume, without loss of generality, that $\sigma(0) \in I^+(S)$. Therefore there is a sequence of timelike geodesics $\gamma_n$ that realize the distance between $S$ and $\sigma(n)$. By compactness of $S$ and the limit curve theorem, there exists a limit curve $\gamma$ that is an $S$-ray (cf.~\cite[Thm.~2.65]{MinRev}). By construction, $\gamma$ is contained in $\overline{U}$, so by Lemma \ref{lem:limitpast}, $I^-(\gamma) \subseteq U$. Since $U$ is minimal, in fact $I^-(\gamma) = U$.
\end{proof}

Combining Proposition \ref{prop:ray} with Galloway's result \cite[Thm.~4.1]{GalWarsaw} yields an alternative proof of Corollary \ref{cor:Bartnik}. While the ray-to-ray condition is weaker than the min-min condition, the min-min condition generalizes other assumptions that had previously been made to prove special cases of Bartnik's conjecture. Note also that the min-min condition, unlike the ray-to-ray condition, is a pure causality condition (i.e.\ conformally invariant).

\begin{prop} \label{prop:impliesminmin}
 Let $(M,g)$ be a globally hyperbolic spacetime with compact Cauchy surfaces. Any of the following conditions implies the min-min condition:
 \begin{enumerate}
  \item There is a point $p \in M$ such that $M \setminus \left(I^+(p) \cup I^-(p) \right)$ is compact (Bartnik \cite{Bar}).
  \item The future causal boundary is a single point (Tipler \cite{Tip}).
  \item The future causal boundary is spacelike (Galloway and Vega \cite[Thm.~5.10]{GaVe2}).
  \item There is a Cauchy surface $S \subset M$ such that $S \subset I^-(\gamma)$ for every future-inextendible causal curve $\gamma$.
 \end{enumerate}
\end{prop}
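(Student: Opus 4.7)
The plan is to verify each of the four sufficient conditions separately, producing in each case a minimal TIP $U$ and a minimal TIF $V$ (both guaranteed by Theorem \ref{thm:mTIP} and its time-reverse) with $U \cap V \neq \emptyset$.

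Cases (i) and (ii) should be the most direct. For (i), compactness of $K := M \setminus (I^+(p) \cup I^-(p))$ together with global hyperbolicity implies that any future-inextendible causal curve $\gamma$ must eventually leave $K$ into the future, since $\gamma$ crosses every Cauchy surface and $K$ sits between two of them. For large enough parameter $\gamma(t)$ is also in the future of $p$, hence cannot lie in $I^-(p)$, so $\gamma(t) \in I^+(p)$ and $p \in I^-(\gamma)$. Dually $p \in I^+(\sigma)$ for every past-inextendible causal $\sigma$, so $p$ is in every TIP and every TIF, and any minimal pair meets at $p$. For (ii), Tipler's hypothesis makes $M$ the unique, hence minimal, TIP, and any minimal TIF is a non-empty subset of $M$.

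For (iv), the hypothesis directly gives $S \subseteq U$ for every TIP $U$. A minimal TIF can, by (the dual of) Lemma \ref{lem:TIP}, be written as $V = I^+(\sigma)$ for a past-inextendible timelike curve $\sigma$; extending $\sigma$ to be future-inextendible as well does not alter $I^+(\sigma)$ (the added points already lie in $I^+$ of earlier ones), and the resulting inextendible curve crosses the Cauchy surface $S$. The crossing point lies in both $V$ and $S \subseteq U$, so $U \cap V \neq \emptyset$.

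Case (iii) should be the main obstacle, because spacelike future causal boundary controls only TIPs (every TIP is automatically minimal) and gives no direct information about TIFs. My plan is to produce both ingredients of the min-min pair from a single inextendible curve: starting from any minimal TIF $V = I^+(\sigma)$, extend $\sigma$ to be inextendible in both directions (a change that does not affect $I^+(\sigma)$), and set $U := I^-(\sigma)$. Then $U$ is automatically a TIP, hence minimal by (iii), and every point of $\sigma$ itself lies in $U \cap V$.
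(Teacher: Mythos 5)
Your proposal is correct and follows essentially the same route as the paper: invoke Theorem \ref{thm:mTIP} to obtain a minimal TIP and a minimal TIF, then verify case by case that they intersect (with $p$ itself lying in the intersection in case (i), the timelike curve representing the minimal TIF providing the intersection in case (iii), and the Cauchy surface $S$ providing it in case (iv)). The only cosmetic differences are that you prove (ii) directly ($M$ is the unique, hence minimal, TIP) where the paper reduces (ii) to (iii), and that in (i) you argue via the time function rather than via compactness of causal diamonds; both are sound.
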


The references given correspond to papers where the relevant condition was introduced or studied. Of course, the time-reversed versions of the above conditions also imply the min-min condition.

\begin{proof}
 By Theorem \ref{thm:mTIP}, there is a minimal TIP $U = I^-(\gamma)$ and a minimal TIF $V = I^+(\sigma)$, for $\gamma$ and $\sigma$ two inextendible timelike curves.

 \textit{(i)} $\implies$ \textit{min-min}. Since $(M,g)$ is globally hyperbolic, $\gamma$ and $\sigma$ cannot be entirely contained in the compact set $M \setminus \left(I^+(p) \cup I^-(p) \right)$. Therefore $\gamma$ must eventually enter the future $I^+(p)$, since if it stayed forever in $I^-(p)$, compactness of the causal diamonds would be violated. Similarly, $\sigma$ must enter $I^-(p)$ (towards the past). But then $p \in U \cap V$, so $U \cap V \neq \emptyset$.

 \textit{(ii)} $\implies$ \textit{(iii)} $\implies$ \textit{min-min}. Since $\sigma$ is inextendible, $I^-(\sigma)$ is a TIP, which must be minimal, because (iii) implies that all TIPs are minimal. Since $I^+(\sigma)$ is a minimal TIF by assumption, and clearly $I^+(\sigma) \cap I^-(\sigma) \neq \emptyset$ by timelikeness of $\sigma$, we are done.

 \textit{(iv)} $\implies$ \textit{min-min}. On the one hand, $S \subset U$ by assumption. On the other hand, $V \cap S \neq \emptyset$ since $\sigma$ and hence $V$ must intersect every Cauchy surface. Therefore $U \cap V \neq \emptyset$, as claimed.
\end{proof}

Note that any of the above conditions (including min-min and ray-to-ray) is satisfied on a Lorentzian product. Thus, if Bartnik's conjecture is true, then all of these conditions are, in fact, equivalent (within the class of spacetimes satisfying the conjecture's assumptions). On the other hand, Example \ref{ex:sharp} below shows that, without assuming the timelike convergence condition, the min-min condition is strictly weaker than (i)-(iv).

\begin{exam} \label{ex:sharp}
 Consider the region in $(1+1)$-dimensional Minkowski spacetime given by $-3 \leq x \leq 3$ and $\max\{-1-\vert x \vert,-2\} < t < \min\{ \vert x \vert, 1 \}$, and identify $(-3,t) \sim (3,t)$ for all $-2 < t < 1$. See Figure \ref{fig:sharp} for a visual representation. The future causal boundary corresponds one-to-one to the set $\{ t = \min\{ \vert x \vert, 1 \} \}$, while the past causal boundary corresponds to $\{ t = \max\{-1-\vert x \vert,-2\} \}$. It is then easy to see that $U := \{t < -\vert x \vert\}$ is a minimal TIP, while $V := \{t > \vert x \vert-1\}$ is a minimal TIF, and $U \cap V = \{\vert x \vert-1 < t < -\vert x \vert \} \neq \emptyset$, so the min-min condition is satisfied. Note that there are more (intersecting) minimal TIPs and TIFs, since every TIP and TIF representing a point where the causal boundary is spacelike will be minimal.

 On the other hand, the conditions (i)-(iv) from Proposition \ref{prop:impliesminmin} are not satisfied: If (i) were satisfied, then the point $p$ with $M \setminus \left(I^+(p) \cup I^-(p) \right)$ compact would have to lie in $U \cap V$, because otherwise $U \cap V \subseteq M \setminus \left(I^+(p) \cup I^-(p) \right)$, which is not possible for a compact set. But if $p \in U \cap V$, then  $M \setminus \left(U \cup V\right) \subset M \setminus \left(I^+(p) \cup I^-(p) \right)$, which also contradicts compactness. Clearly, (ii) and (iii) are not satisfied either, since the causal boundary contains a spacelike and a lightlike part. Finally, (iv) is no satisfied, because one can find an inextendible causal curve contained in $U$, and then (iv) would imply that the Cauchy surface $S$ is contained in $U$. But one can also find inextendible causal curves that never meet $U$, a contradiction.
\end{exam}

 \begin{figure}
  \centering
  \begin{tikzpicture}[scale=1.5]
   \draw[thick] (-3,1) -- (-1,1) -- (0,0) -- (1,1) -- (3,1) -- (3,-2) -- (1,-2) -- (0,-1) -- (-1,-2) -- (-3,-2) -- cycle;
   \draw[thick,dashed] (-2,1) -- (0,-1) -- (2,1);
   \draw[thick,dashed] (-2,-2) -- (0,0) -- (2,-2);
   \node at (0,-0.5) {\large $U \cap V$};
   \node at (1,-1.5) {\large $U$};
   \node at (-1,-1.5) {\large $U$};
   \node at (1,0.5) {\large $V$};
   \node at (-1,0.5) {\large $V$};
   \node at (0,-2.1) {\large $\CB^-$};
   \node at (0,1.1) {\large $\CB^+$};
   \node[rotate=90] at (3.2,-0.5) {identify};
   \node[rotate=90] at (-3.2,-0.5) {identify};
   \draw[thick,->] (-3,-2) -- (-3,-0.5);
   \draw[thick,->] (3,-2) -- (3,-0.5);
  \end{tikzpicture}
 \caption{The spacetime of Example \ref{ex:sharp}, with its intersecting minimal TIP $U$ and minimal TIF $V$.}
 \label{fig:sharp}

 \end{figure}
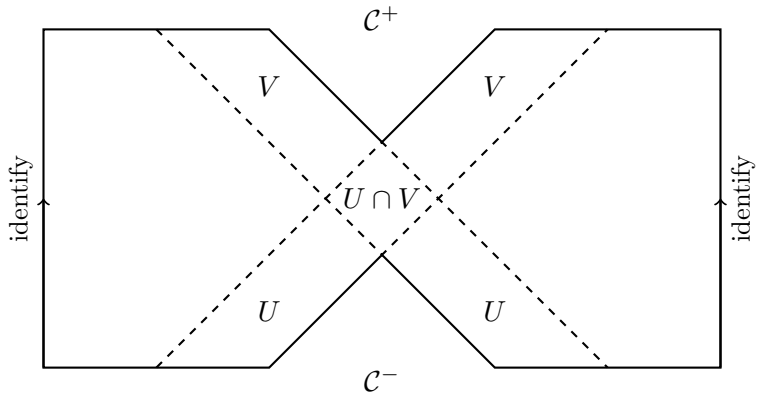

We end the paper with an example, borrowed from Ehrlich and Galloway \cite{EhGa}, of a spacetime $(M,g)$ with compact Cauchy surfaces which does not satisfy the min-min condition. Ehrlich and Galloway show that one can choose a conformal factor $\Omega$ such that $(M, \Omega g)$ is geodesically complete but does not contain any timelike lines (and hence does not split). However, $(M,\Omega g)$ does not satisfy the timelike convergence condition, either. This could not be any other way, since Bartnik's splitting conjecture is known to hold in $(1+1)$-dimensions (this follows from the result for sectional curvature bounds in \cite{EhGa}).

\begin{exam}[\cite{EhGa}] \label{ex:notminmin}
 Consider the region in $(1+1)$-dimensional Minkowski spacetime given by $-1 \leq x \leq 1$ and $-1-\vert x \vert < t < 1 - \vert x \vert$, and identify $(-1,t) \sim (1,t)$ for all $-2 < t < 0$. See Figure \ref{fig:notminmin} for a visual representation. The future causal boundary corresponds one-to-one to the set $\{ t = 1 - \vert x \vert \}$, while the past causal boundary corresponds to $\{ t = -1 - \vert x \vert \}$. It is then easy to see that $U := \{t < \vert x \vert-1\}$ forms the only minimal TIP in the spacetime, while $V := \{t > \vert x \vert-1\}$ forms the only minimal TIF, and $U \cap V = \emptyset$, so the min-min condition is not satisfied.
\end{exam}

 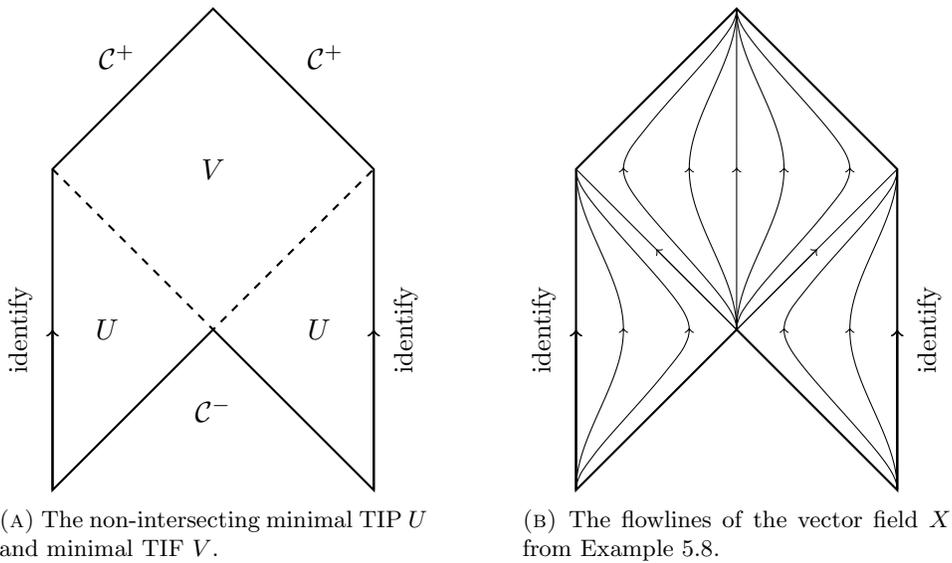
\begin{figure}
 \begin{subfigure}{0.45\textwidth}
  \centering
  \begin{tikzpicture}
  \begin{axis}[xmin=-2, xmax=2, ymin=-2, ymax=2, hide axis,axis equal image,clip=false,scale=1.5]
   \draw[thick] (axis cs:-1,0) -- (axis cs:0,1) -- (axis cs:1,0) -- (axis cs:1,-2) -- (axis cs:0,-1) -- (axis cs:-1,-2) -- cycle;
   \draw[thick,dashed] (axis cs:-1,0) -- (axis cs:0,-1) -- (axis cs:1,0);
   \node at (axis cs:0,0) {\large $V$};
   \node at (axis cs:0.66,-1) {\large $U$};
   \node at (axis cs:-0.66,-1) {\large $U$};
   \node at (axis cs:0,-1.5) {\large $\CB^-$};
   \node at (axis cs:-0.6,0.7) {\large $\CB^+$};
   \node at (axis cs:0.7,0.7) {\large $\CB^+$};
   \node[rotate=90] at (axis cs:1.2,-1) {identify};
   \node[rotate=90] at (axis cs:-1.2,-1) {identify};
   \draw[thick,->] (axis cs:-1,-2) -- (axis cs:-1,-1);
   \draw[thick,->] (axis cs:1,-2) -- (axis cs:1,-1);
   \end{axis}
  \end{tikzpicture}
 \caption{The non-intersecting minimal TIP $U$ and minimal TIF $V$.}
 \label{fig:notminmin}
 \end{subfigure} \hfill
 \begin{subfigure}{0.45\textwidth}
    \centering
  \begin{tikzpicture}
  \begin{axis}[xmin=-2, xmax=2, ymin=-2, ymax=2, hide axis,axis equal image,clip=false,scale=1.5]
   \draw[thick] (axis cs: -1,0) -- (axis cs: 0,1) -- (axis cs: 1,0) -- (axis cs: 1,-2) -- (axis cs: 0,-1) -- (axis cs: -1,-2) -- cycle;
   \draw (axis cs: -1,0) -- (axis cs: 0,-1) -- (axis cs: 1,0);
   \draw[->] (axis cs: 0,-1) -- (axis cs: 0.5,-0.5);
   \draw[->] (axis cs: 0,-1) -- (axis cs: -0.5,-0.5);
   \node[rotate=90] at (axis cs: 1.2,-1) {identify};
   \node[rotate=90] at (axis cs: -1.2,-1) {identify};
   \draw[thick,->] (axis cs: -1,-2) -- (axis cs: -1,-1);
   \draw[thick,->] (axis cs: 1,-2) -- (axis cs: 1,-1);
   \foreach \k in {-2,...,2}
   {
   \addplot[domain=-1.57:1.57,samples=50,smooth,variable=\x] ({(rad(atan(\k^2+tan(deg(x))))-x)/pi},{(rad(atan(\k^2+tan(deg(x))))+x)/pi});
   \addplot[->,domain=-0.01:0.01,samples=2,variable=\x] ({rad(atan(0.5*\k^2)-atan(-0.5*\k^2))/pi},{x});
   }
   \foreach \k in {-2,...,-1}
   {
   \addplot[domain=-1.57:1.57,samples=50,smooth,variable=\x] ({(rad(atan(\k^2+tan(deg(x))))-x)/pi+1},{(rad(atan(\k^2+tan(deg(x))))+x)/pi-1});
   \addplot[->,domain=-0.01:0.01,samples=2,variable=\x] ({rad(atan(0.5*\k^2)-atan(-0.5*\k^2))/pi+1},{x-1});
   }
   \foreach \k in {1,...,2}
   {
   \addplot[domain=-1.57:1.57,samples=50,smooth,variable=\x] ({(rad(atan(\k^2+tan(deg(x))))-x)/pi-1},{(rad(atan(\k^2+tan(deg(x))))+x)/pi-1});
   \addplot[->,domain=-0.01:0.01,samples=2,variable=\x] ({rad(atan(0.5*\k^2)-atan(-0.5*\k^2))/pi-1},{x-1});
   }
  \end{axis}
  \end{tikzpicture}
 \caption{The flowlines of the vector field $X$ from Example \ref{ex:flow}.}
 \label{fig:flow}
 \end{subfigure}
 \caption{The spacetime of Examples \ref{ex:notminmin} and \ref{ex:flow}.}
 \label{fig:ex}
 \end{figure}

\section{Timelike conformal symmetries} \label{sec:conf}

In this section, we prove Theorem \ref{thm:conf} (subdivided into lemmas), and then give some examples. Recall from Definition \ref{def:conf} that a conformal diffeomorphism $\phi$ with the property that $x \ll \phi(x)$ for every $x \in M$ is called timelike. A timelike conformal transformation induces a $\bZ$-action whose orbits are timelike chains, just as a complete timelike conformal Killing vector field induces an $\real$-action whose orbits are timelike curves. We start by proving a series of lemmas for timelike conformal transformations.

\begin{lem} \label{lem:timeor}
 Let $(M,g)$ be a chronological spacetime with a timelike conformal transformation $\phi : M \to M$. Then $\phi$ preserves the time orientation and the causal and chronological relation.
\end{lem}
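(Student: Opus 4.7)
The plan is to unpack the definition: $\phi^* g = \Omega g$ with $\Omega > 0$ gives $g(\phi_* v, \phi_* v) = \Omega(x)\, g(v,v)$ for every $v \in T_x M$, so $\phi_*$ pointwise permutes the timelike, null, and causal cones. The only nontrivial assertion is that $\phi$ preserves the time orientation; from there, preservation of $\ll$ and $\leq$ follows immediately by transporting curves.

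For the time orientation, I would invoke the standard dichotomy: on each connected component of $M$, the map $\phi_*$ either carries the future-directed timelike cone to the future-directed timelike cone everywhere, or to the past-directed one everywhere. Indeed, if $T$ is a local future-directed timelike vector field, the sign of $g(\phi_* T, T \circ \phi)$ is continuous and nonzero, so the ``preserving'' and ``reversing'' loci are both open, disjoint, and cover $M$.

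I would then rule out the reversing alternative using the hypothesis $x \ll \phi(x)$ together with chronology. Given any $x$, fix a future-directed timelike curve $\gamma \colon [0,1] \to M$ with $\gamma(0)=x$ and $\gamma(1)=\phi(x)$; this $\gamma$ lies in a single connected component of $M$, so $\phi$ preserves each component setwise. Now $\phi \circ \gamma$ is a timelike curve from $\phi(x)$ to $\phi^2(x)$. If $\phi$ reversed time orientation on the component of $x$, then $\phi \circ \gamma$ would be past-directed, giving $\phi^2(x) \ll \phi(x)$. But the hypothesis applied at the point $\phi(x)$ reads $\phi(x) \ll \phi^2(x)$, and concatenating these two chronological relations yields $\phi(x) \ll \phi(x)$, contradicting chronology. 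Hence $\phi$ preserves the time orientation on every component. With that in hand, any future-directed causal (resp.\ timelike) curve $\alpha$ from $x$ to $y$ is sent by $\phi$ to a curve $\phi \circ \alpha$ whose tangent vectors $\phi_* \alpha'$ are causal (resp.\ timelike) by the conformal identity and future-directed by what we just proved, so $\phi(x) \leq \phi(y)$ (resp.\ $\phi(x) \ll \phi(y)$).

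The main obstacle is the time orientation step, since a conformal diffeomorphism can a priori reverse time; the argument uses precisely the combination of chronology with the defining property $x \ll \phi(x)$ to exclude the small chronology-violating loop $\phi(x) \ll \phi^2(x) \ll \phi(x)$. Everything else is essentially bookkeeping with the conformal identity.
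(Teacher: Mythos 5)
Your proposal is correct and follows essentially the same route as the paper: reduce everything to the time-orientation question via the conformal identity, then rule out the orientation-reversing case by noting that it would give $\phi^2(x) \ll \phi(x)$, which together with $\phi(x) \ll \phi^2(x)$ contradicts chronology. Your extra care with connected components is a minor refinement of the paper's appeal to the dichotomy of time orientations, not a different argument.
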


\begin{proof}
 A conformal transformation always preserves causal and timelike vectors and curves. Thus $\phi$ preserves the causal and chronological relation if and onlf if it preserves the time orientation. Let $X$ be the timelike vector field that determines the time orientation of $(M,g)$. Note that there are only two possible time orientations on a spacetime (the other one given by $-X$). Since $d\phi(X)$ is a timelike vector field on $M$, it determines a time orientation, and we need to prove that it conicides with the one determined by $X$.

 Let $p \in M$ be any point. Since $\phi$ is timelike, $p \ll \phi(p)$, meaning that there exists a future-directed timelike curve $\gamma$ from $p$ to $\phi(p)$. Because $\phi$ is conformal, $\phi \circ \gamma$ is a timelike curve from $\phi(p)$ to $\phi^2(p)$. If $\phi$ is orientation reversing, then $\phi \circ \gamma$ is past-directed, and hence $\phi^2(p) \ll \phi(p)$. But by timelikeness of $\phi$, we also have $\phi(p) \ll \phi^2(p)$, in contradiction to the assumption that $(M,g)$ is chronological. Thus $\phi$ must be time orientation preserving.
\end{proof}

\begin{lem} \label{lem:xn}
 Let $(M,g)$ be a chronological spacetime with a timelike conformal transformation $\phi : M \to M$. Then, for every point $x \in M$, there exists a future-inextendible timelike curve $\gamma : [0,\infty) \to M$ with $\gamma(n) = \phi^n(x)$ for all $n \in \nat$.
\end{lem}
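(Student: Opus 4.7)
The plan is to construct $\gamma$ as a concatenation of timelike segments between consecutive iterates $\phi^n(x)$, and then to verify future-inextendibility by deriving a contradiction from the existence of a would-be endpoint.

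First I would combine the hypotheses: since $\phi$ is timelike we have $y \ll \phi(y)$ for every $y \in M$, and since $\phi$ preserves the chronological relation by Lemma \ref{lem:timeor}, applying $\phi^n$ to both sides gives $\phi^n(x) \ll \phi^{n+1}(x)$ for every $n \in \nat$. For each such $n$ I can therefore choose a future-directed timelike curve $\sigma_n \colon [n,n+1] \to M$ from $\phi^n(x)$ to $\phi^{n+1}(x)$. Concatenating them yields a piecewise smooth future-directed timelike curve $\gamma \colon [0,\infty) \to M$ satisfying $\gamma(n) = \phi^n(x)$ for all $n \in \nat$; if desired, the corners at integer times can be smoothed in arbitrarily small neighbourhoods, making $\gamma$ smooth and still timelike.

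Next I would verify that $\gamma$ is future-inextendible. Suppose for contradiction that $\gamma$ has a future endpoint, that is, $\gamma(s) \to p$ in $M$ as $s \to \infty$. Restricting to integer values of the parameter yields $\phi^n(x) = \gamma(n) \to p$. Applying $\phi$ and using its continuity gives $\phi^{n+1}(x) = \phi(\phi^n(x)) \to \phi(p)$, whereas the shifted sequence $\phi^{n+1}(x)$ also tends to $p$ (being a subsequence of the original convergent sequence). Hence $\phi(p) = p$, and then timelikeness of $\phi$ forces $p \ll \phi(p) = p$, contradicting the chronology hypothesis on $(M,g)$.

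I do not expect any substantial obstacle: the argument is essentially a discrete-time analogue of the standard observation that, in a chronological spacetime, an integral curve of a complete timelike conformal Killing field cannot converge to a point in $M$. The only minor care needed is in the smoothing of the concatenation at the countably many corners, but since this can be localised near each integer time it is entirely routine.
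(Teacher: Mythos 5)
Your proposal is correct and follows essentially the same route as the paper: concatenate timelike segments between the iterates $\phi^n(x)$, then show a limit point of the sequence would be a fixed point of $\phi$, contradicting chronology via $p \ll \phi(p) = p$. Your fixed-point step (continuity of $\phi$ applied to the convergent sequence plus uniqueness of limits) is a slightly more direct phrasing of the paper's neighbourhood-separation argument, but the idea is identical.
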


\begin{proof}
Let $x_n := \phi^n(x)$. Since $x_{n} \ll x_{n+1}$, there is a timelike curve $\gamma_n : [n,n+1] \to M$ from $x_n$ to $x_{n+1}$. We then construct $\gamma$ by concatenating the $\gamma_n$'s. It remains to show that $\gamma$ is future-inextendible. For this, it suffices if the sequence $(x_n)_n$ does not converge. Suppose that $x_n$ converges to a limit point $x_\infty$. We show that then $x_\infty$ is a fixed point of $\phi$, but this would imply $x_\infty \ll x_\infty$, contradicting chronology. Suppose that $\phi(x_\infty) \neq x_\infty$, then by continuity there exists a small enough neighborhood $U$ of $x_\infty$ such that $\phi(U) \cap U = \emptyset$. Convergence $x_n \to x_\infty$ implies that $x_n \in U$ for all $n$ large enough, but then, by the previous sentence, $x_{n+1} = \phi(x_n) \not\in U$, a contradiction. Therefore $x_\infty$ must be a fixed point, but there can be no fixed points, so we conclude that the sequence $x_n$ diverges, and hence $\gamma$ is inextendible.
\end{proof}

\begin{lem} \label{lem:MisTIP}
 Let $(M,g)$ be a connected spacetime which admits a timelike conformal transformation $\phi : M \to M$. Then $M$ is a TIP and a TIF of $(M,g)$.
\end{lem}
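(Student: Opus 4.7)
The plan is to fix any $x \in M$ and let $\gamma$ be the future-inextendible timelike curve through the forward orbit $\{\phi^n(x)\}_{n \in \nat}$ produced by Lemma \ref{lem:xn}. Setting $U := I^-(\gamma) = \bigcup_{n \in \nat} I^-(\phi^n(x))$, the goal is to prove $U = M$ by a connectedness argument. Once this is done, $M = I^-(\gamma)$ is automatically a TIP, since chronology forbids $M = I^-(p)$ for any $p \in M$ (that would give $p \ll p$).

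First I would check that $U$ is $\phi$-invariant. Because $\phi$ preserves chronology (Lemma \ref{lem:timeor}), one has $\phi(I^-(q)) = I^-(\phi(q))$ for every $q \in M$; combined with the nesting $I^-(\phi^n(x)) \subseteq I^-(\phi^{n+1}(x))$ forced by $\phi^n(x) \ll \phi^{n+1}(x)$, this yields $\phi(U) = U$.

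The crucial step is then to rule out $\partial U \neq \emptyset$. If $p \in \partial U$, then $\phi$-invariance of $U$ combined with $\phi$ being a homeomorphism gives $\phi(p) \in \partial U \subseteq \overline{U}$. Applying Lemma \ref{lem:limitpast} at $\phi(p)$ (not at $p$) produces $I^-(\phi(p)) \subseteq U$; but by the defining property of $\phi$ we have $p \ll \phi(p)$, so $p \in I^-(\phi(p)) \subseteq U$, contradicting $p \in \partial U$ since $U$ is open. Thus $U$ is clopen. Since $\phi^{-1}(x) \ll x$ (apply the timelike condition at $\phi^{-1}(x)$), we have $\phi^{-1}(x) \in U$, so $U$ is non-empty, and connectedness of $M$ forces $U = M$.

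The TIF statement is obtained by running the same argument on the backward orbit $\{\phi^{-n}(x)\}_{n \in \nat}$, which by the time-reversed version of Lemma \ref{lem:xn} is the vertex sequence of a past-inextendible timelike curve; with $W := \bigcup_{n \in \nat} I^+(\phi^{-n}(x))$ in place of $U$, the time-reversed Lemma \ref{lem:limitpast} applied at $\phi^{-1}(p)$, together with the relation $\phi^{-1}(p) \ll p$, closes the argument. I expect the main obstacle to be precisely the transfer trick in the third paragraph: knowing $p \in \partial U$ alone does not place $p$ inside $U$ via Lemma \ref{lem:limitpast}, but shifting attention to $\phi(p) \in \partial U$ and invoking $p \ll \phi(p)$ does, which is exactly where the hypothesis that $\phi$ is \emph{timelike} is consumed.
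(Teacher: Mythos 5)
Your proposal is correct and follows essentially the same route as the paper: form $U=\bigcup_n I^-(\phi^n(x))=I^-(\gamma)$ via Lemma \ref{lem:xn}, show $\phi(U)=U$ hence $\phi(\partial U)=\partial U$, and derive a contradiction from a boundary point by using $p\ll\phi(p)\in\partial U$ to push $p$ into the open set $U$. The only cosmetic difference is that you invoke Lemma \ref{lem:limitpast} at $\phi(p)$ where the paper re-runs the openness-of-$I^+$ argument directly, but these are the same computation.
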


\begin{proof}
 Let $x \in M$ be a point, $x_n := \phi^n(x)$, and define the set
 \begin{equation*}
  U := \bigcup_{n=1}^\infty I^-(x_n) = \bigcup_{n=1}^\infty \phi^n \left( I^-(x) \right).
 \end{equation*}
 Clearly, $U$ is a past set, and it is invariant under $\phi$, i.e.\ $\phi(U)=U$. By Lemma~\ref{lem:xn}, there is a future-inextendible timelike curve $\gamma$ with $\gamma(n) = x_n$. Hence $U = I^-(\gamma)$ is a TIP.

 Assume that $U \neq M$. Then, since $M$ is connected, $U$ has non-empty boundary $\partial U$. We proceed to show that the existence of a $p \in \partial U$ leads to a contradiction. On the one hand, $U = \phi(U)$ implies, by continuity of $\phi$, that $\partial U = \phi(\partial U)$. On the other hand, timelikeness of $\phi$ implies $\phi(p) \in I^+(p)$. Thus $q := \phi(p) \in I^+(p) \cap \partial U$. But then, by openness of the chronological relation, there is a $q' \in I^+(p) \cap U$ (close to $q$), and $U$ being a past set then implies $p \in I^-(q') \subset U = \Int(U)$, a contradiction to the assumption that $p \in \partial U$. Hence $\partial U$ must be empty, implying that $U = M$.

 That $M$ is a TIF can be shown by the same argument, replacing $\phi$ with $\phi^{-1}$.
\end{proof}

The second part of Theorem \ref{thm:conf} states that when adding the assumption of compact Cauchy surfaces, we even get $\CB^\pm = \{M\}$. To prove this, we use that by Theorem \ref{thm:mTIP}, the spacetime $(M,g)$ contains a minimal TIP, and hence the following lemma applies.

\begin{lem}
 Let $(M,g)$ be a connected spacetime admitting a timelike conformal transformation $\phi : M \to M$. If $(M,g)$ contains a minimal TIP $U$, then $U = M$ and hence $\CB^+=\{M\}$.
\end{lem}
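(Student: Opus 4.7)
The plan is to use minimality of $U$ together with timelikeness of $\phi$ to deduce that $U$ is invariant under $\phi$, and then apply essentially the same boundary argument used in the proof of Lemma \ref{lem:MisTIP}.

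First I would observe that $\phi$ maps TIPs to TIPs. By Lemma \ref{lem:timeor}, $\phi$ is a time-orientation preserving conformal diffeomorphism, so it sends future-inextendible timelike curves to future-inextendible timelike curves; combined with the fact that $\phi$ preserves $\ll$, this gives $\phi^{-1}(I^-(\gamma)) = I^-(\phi^{-1} \circ \gamma)$ for any such $\gamma$. In particular, if $U$ is a TIP, so are $\phi(U)$ and $\phi^{-1}(U)$.

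The key step is to show that $\phi^{-1}(U) \subseteq U$. For any $x \in \phi^{-1}(U)$ we have $\phi(x) \in U$, and since $\phi$ is timelike $x \ll \phi(x)$, so $x \in I^-(\phi(x)) \subseteq U$ because $U$ is a past set. Minimality of $U$ then forces the TIP $\phi^{-1}(U)$ to coincide with $U$, i.e.\ $\phi(U) = U$. With this $\phi$-invariance in hand, I would reuse the boundary argument from the last paragraph of the proof of Lemma \ref{lem:MisTIP} almost verbatim: if $U \neq M$, then $U$ being a non-empty open proper subset of the connected space $M$ forces $\partial U \neq \emptyset$; for any $p \in \partial U$, continuity of $\phi$ gives $\phi(p) \in \overline{\phi(U)} = \overline{U}$, so the open set $I^+(p)$ (which contains $\phi(p)$) meets $U$ at some point $q$, and then $p \in I^-(q) \subseteq U$ contradicts $p \in \partial U$.

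For the concluding equality $\CB^+ = \{M\}$, no further work is required: any TIP $V$ satisfies $V \subseteq M = U$, and since $U$ is minimal no TIP can be properly contained in it, so $V = U = M$. The only mildly clever step is recognizing that minimality is exactly what upgrades the one-sided inclusion $\phi^{-1}(U) \subseteq U$ — which is immediate from $\phi$ being timelike — to genuine $\phi$-invariance; everything else is a direct adaptation of Lemma \ref{lem:MisTIP}.
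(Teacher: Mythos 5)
Your proposal is correct and follows essentially the same route as the paper: establish $\phi^{-1}(U)\subseteq U$ from timelikeness, note that $\phi^{-1}(U)=I^-(\phi^{-1}\circ\gamma)$ is a TIP so minimality upgrades this to $\phi$-invariance, and then invoke the boundary argument from Lemma \ref{lem:MisTIP} to conclude $U=M$. The only difference is cosmetic: you spell out the inclusion $\phi^{-1}(U)\subseteq U$ via the past-set property, which the paper leaves implicit.
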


\begin{proof}
 Let $\gamma$ be an inextendible causal curve such that $I^-(\gamma) = U$ is the minimal TIP. Then, by timelikeness of $\phi$, we have $\phi^{-1}(U) \subseteq U$. Moreover, the curve $\phi^{-1} \circ \gamma$ is also inextendible causal, and because $\phi$ is conformal, $I^-(\phi^{-1} \circ \gamma) = \phi^{-1}(I^-(\gamma)) = \phi^{-1}(U)$. Thus $\phi^{-1}(U) \subseteq U$ is a TIP, and by minimality of $U$, we must have $\phi^{-1}(U) = U$. In other words, $U$ is invariant under $\phi$, and by the second paragraph in the proof of Lemma \ref{lem:MisTIP}, this implies $U = M$. We conclude that $M$ is a minimal TIP, and hence the causal boundary cannot contain any elements other than $M$.
\end{proof}

That $\CB^-=\{M\}$ in Theorem \ref{thm:conf} follows from the above argument, replacing TIPs with TIFs and $\phi^{-1}$ with $\phi$. Finally, that $(M,g)$ contains a line follows from Proposition \ref{prop:impliesminmin} and Theorem \ref{thm:line} (in fact, a simplified version of the proof of Theorem \ref{thm:line} is already enough, see also \cite{Tip}).

Having proven Theorem \ref{thm:conf}, we end this section with three examples. The first two are of timelike conformal transformations that are not the flow of a conformal Killing vector field. The third one is of a spacetime with non-spacelike causal boundary which admits a complete conformal Killing vector field that is timelike everywhere, except on a codimension $1$ subset, where it vanishes.

\begin{exam}
 Consider $M := \real \times S^1 \times S^1$ with metric tensor \[ g:=-dt^2+(\sin^2(t)+1) d\theta^2 + d\varphi^2.\] Then the map $\phi : M \to M$ given by $\phi(t,\theta,\varphi) := (t+2\pi, \theta, \phi)$ is a timelike conformal transformation. We proceed to show that it does not arise as the flow of a conformal Killing vector field.

 Suppose that $X = X^0 \partial_t + X^1 \partial_\theta + X^2 \partial_\varphi$ is a conformal Killing vector field, i.e.\ $\mathcal{L}_X g = \Lambda g$. From the rotation and reflection symmetry of $g$ in the two $S^1$ factors, it follows that then also $Y = X^0 \partial_t - X^1 \partial_\theta - X^2 \partial_\varphi$ must be conformal Killing, and hence also $Z := \frac{1}{2}\left(X+Y\right) = X^0 \partial_t$. But evaluating the conformal Killing equation in components, we obtain
 \begin{align*}
  \Lambda = \Lambda g_{22} = \left[ \mathcal{L}_Z g \right]_{22} = 0 &\implies \Lambda = 0, \\
  0 = \Lambda g_{01} = \left[ \mathcal{L}_Z g \right]_{01} = -\partial_\theta X^0 &\implies X^0 = X^0(t,\varphi), \\
  0 = \Lambda g_{02} = \left[ \mathcal{L}_Z g \right]_{02} = -\partial_\varphi X^0 &\implies X^0 = X^0(t), \\
  \Lambda (\sin^2(t)+1) = \Lambda g_{11} = \left[ \mathcal{L}_Z g \right]_{11} = 2 \sin(t) \cos(t) X^0 &\implies X^0 = 0.
 \end{align*}
It follows that $X$, which was an arbitrary conformal Killing vector field, cannot have the map $\phi$ as its flow.
\end{exam}

In the previous example, we have shown the absence of timelike conformal Killing fields by direct computation. In the next example, on the other hand, we use a causal theoretic argument based on a result of Javaloyes and S\'anchez \cite{JaSa}. The example below is also an example of a spacetime where the first part of Theorem \ref{thm:conf} applies ($M$ is a TIP), but not the second part ($M$ is not globally hyperbolic, and $\CB^+$ has more than one point).

\begin{exam}
 Consider a cylinder with vertical slits (i.e.\ removed line segments),
 \begin{equation*}
  M := \left(\real \times S^1 \right) \setminus \left\{ (t,0) \mid t \in [2n,2n+1] \text{ for some } n \in \nat \right\},
 \end{equation*}
 equipped with the Minkowski metric $g := -dt^2+d\theta^2$. Then $(M,g)$ is stably causal (because $t$ is a time function), and hence distinguishing. According to \cite[Thm.~1.2]{JaSa}, if $(M,g)$ were to admit a timelike conformal Killing vector field, then $(M,g)$ would automatically be causally continuous, which it clearly is not (the reflectivity property fails because of the slits). Nonetheless, $(M,g)$ does admit a timelike conformal transformation, namely $\phi(t,\theta) := (t+2,\theta)$. Note also that $(M,g)$ does admit \emph{incomplete} timelike conformal Killing vector fields, such as $\partial_t$. Finally, while $M$ is a TIP of $(M,g)$ by Theorem \ref{thm:conf}, there are also TIPs arising as boundary points of the slits.
\end{exam}

In the two previous examples, the timelike conformal transformation was given by translation of a natural time coordinate. Next we give an example that cannot be written in this form because of a topological obstruction.

\begin{exam}
 Let $n$ be odd and consider $M := \real \times S^n$ equipped with the Lorentzian metric $g := -dt^2 + h$, where $h$ is the round metric on the $n$-sphere $S^n$. Then $(M,g)$ is globally hyperbolic, and $\Sigma_0 := \{t=0\}$, $\Sigma_7 := \{t=7\}$ are compact Cauchy surfaces. Moreover, let $F \colon S^n \to S^n$ denote the antipodal map, and define
 \begin{align*}
  \phi \colon \real \times S^n &\longrightarrow \real \times S^n, \\ (t,x) &\longmapsto (t+7,F(x)),
 \end{align*}
 which is an isometry of $(M,g)$. Moreover, because the diameter of $S^n$ is less than $7$, it is easy to see that $(t+7,y) \in I^+((t,x))$ for all $t \in \real$ and all $x,y \in S^n$. Hence $\phi$ is timelike.

 Consider on $M$ any Cauchy temporal function $\tau$ with its associated splitting
 \begin{align*}
  \psi \colon M &\longrightarrow \real \times \Sigma \\ p &\longmapsto (\tau(p),\pi(p)).
 \end{align*}
 Here $\Sigma := \tau^{-1}(0)$, and $\pi(p)$ is the unique intersection point $\pi(p) \in \gamma \cap \Sigma$, where $\gamma$ is the integral curve of the gradient vector field $\nabla \tau$ through $p$. Suppose that we can choose $\tau$ such that
 \begin{equation*}
  \chi := \psi \circ \phi \circ \psi^{-1} \colon \real \times \Sigma \to \real \times \Sigma
 \end{equation*}
 takes the form
 \begin{equation*}
  \chi (\tilde t, \tilde x) = (\tilde t + T, \tilde x)
 \end{equation*}
 for some constant $T \in \real$. Clearly, $\chi$ is homotopic to the identity, but this would imply that $\phi = \psi^{-1} \circ \chi \circ \psi$ is homotopic to the identity as well, via a homotopy $H$. Then, if $\Pi : M \to S^n$ denotes the canonical projection of $M = \real \times S^n$ onto $S^n$, $\Pi \circ H$ would be a homotopy between the antipodal map $F \colon S^n \to S^n$ and the identity. This is a contradiction, because in odd dimension, the antipodal map is not homotopic to the identity. Therefore, in this example we cannot find a splitting of the spacetime where the action of $\phi$ reduces to discrete time translations.
\end{exam}

Lastly, we show that timelikeness of cannot be replaced by causality, even when talking about conformal Killing vector fields.

\begin{exam} \label{ex:flow}
 Consider the spacetime of Example \ref{ex:notminmin} (see also Figure \ref{fig:ex}). We can describe the ambient Minkowski spacetime in (rescaled) double null coordinates $p :=\frac{\pi}{2}(t-x)$, $q := \frac{\pi}{2}(t+x)$, so that our spacetime $M$ corresponds to the points $(p,q)$ with
 \begin{equation*}
  p,q < \frac{\pi}{2} \ \text{ and } \ \begin{cases}
                                    -\frac{\pi}{2} < p, -\pi+p < q &\text{or} \\
                                    -\frac{\pi}{2} < q, -\pi+q < p.
                                   \end{cases}
 \end{equation*}
In these coordinates, the metric is given by $g = -\frac{4}{\pi^2}dpdq$. The region $V$ now corresponds to $-\frac{\pi}{2} < p,q <\frac{\pi}{2}$, and is conformally equivalent to $(1+1)$-dimensional Minkowski spacetime through the standard conformal embedding
 \begin{align*}
  F : \real^{1,1} &\longrightarrow V \\
  (u,v) &\longmapsto (\arctan(u),\arctan(v))
 \end{align*}
commonly used to construct the Penrose diagramm of Minkowski spacetime. Here $(u,v)$ are again the double null coordinates, in which the Minkowski metric takes the form $\eta = -du dv$. Then $F^* \eta = \frac{-1}{\cos^2 (p) \cos^2 (q)} dp dq$, so indeed $F$ is conformal. Similarly, we can conformally embed another copy of $\real^{1,1}$ into the region $U$.

The flow along the coordinate (Killing) vector field $2 \partial_t$ in $\real^{1,1}$, pushed forward through the conformal embedding $F$, defines a conformal Killing vector field
\begin{align*}
 X := F^*(2 \partial_t) = F^*\left(\partial_u + \partial_v \right) = \frac{1}{1+\tan^2(p)} \partial_p + \frac{1}{1+\tan^2(q)} \partial_q
\end{align*}
on $V$. This vector field $X$ can be extended to the boundary of $V$, and the extension is complete (and follows the lightlike direction along the boundary). Repeating the same construction on $U$, we construct a vector field, denoted again by $X$, on the whole spacetime $(M,g)$. The flowlines of $X$ are depicted in Figure \ref{fig:flow}. That $X$ is conformal Killing is more or less obvious by construction, at least in the interior of $U$ and $V$. However, one should be a bit careful on the boundary between the two, where we cannot interpret $X$ as the pushforward of $2 \partial_t$. Nonetheless, a simple computation shows that
\begin{align*}
 \left[\mathcal{L}_X g\right]_{11} = \left[\mathcal{L}_X g\right]_{22} = 0, && \left[\mathcal{L}_X g\right]_{12} = \left[\mathcal{L}_X g\right]_{21},
\end{align*}
and hence $X$ is conformal Killing for the metric $g = -dpdq$.

We have thus shown that our spacetime $(M,g)$, despite having non-spacelike causal boundary, admits a complete causal conformal Killing vector field. The key fact that makes the proof of Theorem \ref{thm:conf} fail in this example, is that the minimal TIP $U$ is invariant under the flow of $X$.
\end{exam}

\section*{Declaration}

This version of the article has been accepted for publication after peer review, but is not the Version of Record and does not reflect post-acceptance improvements, or any corrections. The Version of Record is available online at: \href{http://dx.doi.org/10.1007/s00009-025-02832-3}{http://dx.doi.org/10.1007/s00009-025-02832-3}

\bibliographystyle{abbrv}
\bibliography{refs.bib}

\end{document}